\newtheorem{theorem}{Theorem}
\newtheorem{lemma}[theorem]{Lemma}
\def\@endtheorem{\endtrivlist}
\newcounter{rules}
\newcommand{\Rule}{\refstepcounter{rules}\par\smallskip\noindent
\textbf{(\arabic{rules})}
\quad}
\newcounter{rulesb}
\newcommand{\Ruleb}{\refstepcounter{rulesb}\par\smallskip\noindent
\textbf{(B\arabic{rulesb})}\quad}
\newcommand{\wvcalgname}{WVCAlg}
\newcommand{\wvcalg}[3]{\mathrm{WVCAlg}(#1,#2,#3,f)}
\newcommand{\cc}[3]{\mathcal{C}_{#1}(#2,#3)}
\newcommand{\ccs}[3]{\mathcal{C}_{#1}^*(#2,#3)}
\newcommand{\vcc}[3]{V_{#1}(#2,#3)}
\newcommand{\vccs}[3]{V_{#1}^*(#2,#3)}
\newcommand{\map}[2]{h_{#1,#2}}
\newcommand{\state}{q}
\newcommand{\ca}{1.402}
\newcommand{\cb}{0.9808}
\newcommand{\calpha}{0.156}
\newcommand{\cbeta}{0.175}
\begin{document}

\title{Weighted vertex cover on graphs with maximum degree 3}
\author{Dekel Tsur%
\thanks{Department of Computer Science, Ben-Gurion University of the Negev.
Email: \texttt{dekelts@cs.bgu.ac.il}}}
\date{}
\maketitle

\begin{abstract}
We give a parameterized algorithm for weighted vertex cover on graphs with
maximum degree 3 whose time complexity is $O^*(\ca^t)$,
where $t$ is the minimum size of a vertex cover of the input graph.
\end{abstract}

\paragraph{Keywords} graph algorithms, parameterized complexity.

\section{Introduction}
For an undirected graph $G$, a \emph{vertex cover} of $G$ is a set of vertices
$S$ such that every edge of $G$ is incident on at least one vertex in $S$.
In the \emph{vertex cover problem} the input is an undirected graph $G$
and the goal is to find a vertex cover of $G$ with minimum size.
In the \emph{weighted vertex cover problem} the input is an undirected graph $G$
and a weight function $w\colon V(G) \to \mathbb{R}^{\geq 0}$.
The goal is to find a vertex cover of $G$ with minimum weight.

The parameterized complexity of the vertex cover problem have been studied
extensively.
For the unweighted problem, the first parameterized algorithm
was given in~\cite{buss1993nondeterminism}.
Improved algorithms were given in~\cite{downey1995parameterized, %
balasubramanian1998improved,downey1999parameterized,niedermeier1999upper, %
stege1999improved,chen2001vertex,chen2010improved, %
niedermeier2003efficient,chandran2004refined}.
The unweighted problem was also studied on graphs with maximum degree
$3$~\cite{chen2000improvement,chen2005labeled,razgon2009faster,xiao2010note}.
For the weighted vertex cover problem, Niedermeier et
al.~\cite{niedermeier2003efficient} gave an algorithm with time complexity
$O^*(1.396^W)$, where $W$ is the minimum weight of a vertex cover of the input
graph.
They also gave an algorithm with exponential space whose running time is
$O^*(1.379^W)$.
Fomin et al.~\cite{fomin2006branching} gave an algorithm with exponential
space whose running time is $O^*(1.357^W)$.
Shachnai and Zehavi~\cite{shachnai2017multivariate} 
gave a polynomial space algorithm with time complexity $O^*(1.381^s)$,
where $s\leq W$ is the minimum size of a minimum weight vertex cover of the
input graph,
and an exponential space algorithm with time complexity $O^*(1.363^s)$.
Additionally, they gave an algorithm with time complexity $O^*(1.443^t)$,
where $t \leq s$ is the minimum size of a vertex cover of the input graph,
and an algorithm for graphs with maximum degree 3 whose time complexity
is $O^*(1.415^t)$. 

In this paper, we give an algorithm for weighted vertex cover on graph
with maximum degree 3 whose time complexity is $O^*(\ca^t)$.

\section{Preliminaries}\label{sec:preliminaries}
For a graph $G$, let $V(G)$ and $E(G)$ denote the sets of vertices and edges of
$G$, respectively.
For a graph $G$ and a vertex $v\in V(G)$,
$N(v) = \{u \in V(G)\colon (u,v)\in E(G)\}$ and
$N[v] = N(v) \cup \{v\}$.
For a set of vertices $S$,
$N(S) = (\bigcup_{v\in S} N(v)) \setminus S$.

For a graph $G$ and a set of vertices $S$, $G[S]$ is the subgraph
of $G$ induced by $S$ (namely, $G[S]=(S,E(G)\cap (S\times S))$).
We also define $G-S = G[V\setminus S]$.
For a vertex $v$, we write $G-v$ instead of $G-\{v\}$.

For a graph $G$ and a set $U \subseteq V(G)$, let
$\cc{i}{G}{U}$ be the set of connected components $C$ of $G[U]$ such that
$|C|=i$ and $G[C]$ is a complete graph.
Let $\ccs{i}{G}{U}$ be the set of all $C \in \cc{i}{G}{U}$ such that
there is no $v \in V(G)\setminus U$ that is adjacent to all the vertices of $C$.
Let $\vcc{i}{G}{U} = \bigcup_{C \in \cc{i}{G}{U}} C$,
$\vccs{i}{G}{U} = \bigcup_{C \in \ccs{i}{G}{U}} C$, and
$\vcc{\geq 2}{G}{U} = U \setminus \vcc{1}{G}{U}$.

\section{The algorithm}

Our algorithm is based on the algorithm of Shachnai
and Zehavi~\cite{shachnai2017multivariate}.
We first describe the algorithm of Shachnai and Zehavi
(we note that we describe the algorithm slightly differently
than~\cite{shachnai2017multivariate}).
Given an instance $G^*,w^*$ of weighted vertex cover,
the algorithm first finds a vertex cover $U^*$ of $G^*$ with minimum size
(using a fixed parameter algorithm for the unweighted problem).
Additionally, $U^*$ has the property that there is a mapping
$f\colon \ccs{3}{G^*}{U^*} \to \cc{2}{G^*}{U^*}$
such that for every $C \in \ccs{3}{G^*}{U^*}$ there is a vertex
$v \in V(G^*) \setminus U^*$ that is adjacent to a vertex in $C$ and to the two
vertices of $f(C)$.
The algorithm then calls $\wvcalg{G^*}{U^*}{w^*}$,
where $\wvcalg{G}{U}{w}$ is a recursive procedure that returns a minimum weight
vertex cover of $G$. The parameter $U$ is a vertex cover of $G$
(not necessarily a vertex cover with minimum size).
In particular, $V(G) \setminus U$ is an independent set of $G$.

Before describing procedure \wvcalgname,
we define two \emph{base branching rules}:

\Ruleb
Let $v \in U$.
Return a set of minimum weight among
$\wvcalg{G-v}{U \setminus \{v\}}{w} \cup \{v\}$
and
$\wvcalg{G-N[v]}{U \setminus N[v]}{w} \cup N(v)$.

\Ruleb
Let $C \in \cc{3}{G}{U}$.
Return a set of minimum weight among
$\wvcalg{G-A}{U \setminus A}{w} \cup A$,
where the minimum is taken over every set $A \subset C$ of size 2.

Procedure \wvcalgname\ is composed of reduction and branching rules.
The procedure applies the first applicable rule from the following rules
(the last rule is numbered~\ref{rule:final} in order to leave space for the rules
of our algorithm).

\Rule
If $G$ is bipartite, compute a minimum weight vertex cover $S$ of $G$
and return $S$.

\Rule
\label{rule:small-cc}
If there is a connected component $S$ in $G$ of size at most $10$,
compute a minimum weight vertex cover $S$ of $G[C]$
and return $\wvcalg{G-C}{U\setminus C}{w}\cup S$.

\Rule
\label{rule:no-U-neighbors}
If there is $v\in U$ such that $v$ has no neighbors in $V(G) \setminus U$,
return $\wvcalg{G}{U \setminus \{v\}}{w}$.

\Rule
\label{rule:largest}
If there is $C \in \cc{3}{G}{U}$ such that $f(C')\neq f(C)$
for every $C' \in \cc{3}{G}{U} \setminus \{C\}$,
choose $v\in f(C)$.
Apply Rule~(B1) on $v$.
In the branch $G-v$ apply Rule~(B2) on $C$.

\Rule
If $\cc{3}{G}{U} \neq \emptyset$, choose distinct $C,C' \in \cc{3}{G}{U}$ such
that $f(C)=f(C')$, and choose $v\in f(C)$.
Apply Rule~(B1) on $v$.
In the branch $G-v$ apply Rule~(B2) on $C$, and in each resulting branch,
apply Rule~(B2) on $C'$.

\Rule
\label{rule:path}
If there is $u \in U$ such that $|N(u) \cap U| = 1$ and $|N(v)\cap U| = 2$,
where $v$ is the unique neighbor of $u$ in $G[U]$,
apply Rule~(B1) on $v$.

\Rule
\label{rule:cycle}
If there is $v\in U$ such that  $|N(v) \cap U| = 2$,
apply Rule~(B1) on $v$.

\addtocounter{rules}{5}
\Rule
\label{rule:final}
Choose $v \in U$ such that $|N(v) \cap U| = 1$,
and apply Rule~(B1) on $v$.
\addtocounter{rules}{-6}

The analysis of the algorithm above uses the \emph{measure and conquer}
technique, using the measure function $m(G,U) = |\vcc{\geq 2}{G}{U}|$.
The analysis shows that the number of leaves in the branching tree of $G^*$
is at most $1.415^{m(G^*,U^*)} \leq 1.415^t$, where $t = |U^*|$ is the
minimum size of a vertex cover of $G^*$.

\renewcommand{\wvcalg}[3]{\mathrm{WVCAlg}(#1,#2,#3,f,\state)}
\newcommand{\wvcalgb}[4]{\mathrm{WVCAlg}(#1,#2,#3,f,#4)}

We now describe our algorithm.
A \emph{good vertex cover} of a graph $G$ is a vertex cover $U$ of $G$ such that
every connected component in $G[U]$ has size at most~2.
Suppose that $U$ is a good vertex cover of $G$.
We say that a vertex $x \in V(G)\setminus U$ is \emph{bad}
if $|N(x) \cap \vcc{1}{G}{U}| \geq 1$ and either
$|N(x) \cap \vcc{2}{G}{U}| = 1$ or
$x$ is adjacent to both vertices of a connected component in $\cc{2}{G}{U}$.
We say that $x$ is \emph{semi-bad} if $x$ is not bad,
$|N(x) \cap \vcc{1}{G}{U}| = 1$ and $|N(x) \cap \vcc{2}{G}{U}| = 2$.
If $x$ is not bad or semi-bad, way say that $x$ is \emph{good}.
We define a mapping
$\map{G}{U}\colon \vcc{2}{G}{U} \to
 \{0, \frac{1}{4}, \frac{1}{2}, \frac{3}{4}, 1\}$ as follows.
If $v \in \vccs{2}{G}{U}$,
$\map{G}{U}(v) = \max(0,1-b_1-b_2/2)$, where
$b_1$ (resp., $b_2$) is the number of bad (resp., semi-bad) neighbors of $v$.
Now consider a vertex $v \in \vcc{2}{G}{U} \setminus \vccs{2}{G}{U}$,
and let $v'$ be the unique neighbor of $v$ in $G[U]$.
Then, $\map{G}{U}(v) = \map{G}{U}(v') = \max(0,1-b_1/2-b_2/4)$, where
$b_1$ (resp., $b_2$) is the number of bad (resp., semi-bad) vertices in
$N(\{v,v'\})$.

Our algorithm is based on the algorithm of Shachnai and Zehavi.
We make two changes to procedure \wvcalgname.
First, the procedure receives an additional parameter
$\state \in \{0,1,2\}$, which is initially $0$.
Additionally, the following rules are added.

\Rule
\label{rule:degree-1}
If there is $v \in V(G)$ with degree~1,
let $u$ be the neighbor of $v$.
If $w(v) \geq w(u)$, return
$\wvcalg{G-\{u,v\}}{U \setminus \{u,v\}}{w} \cup \{u\}$.
Otherwise, let $w'\colon V(G-v) \to \mathbb{R}^{\geq 0}$ be a function
in which $w'(u) = w(u) - w(v)$ and $w'(x) = w(x)$ for every $x \neq u$.
Let $S = \wvcalg{G-v}{U \setminus \{v\}}{w'}$.
If $u\in S$ return $S$ and otherwise return $S\cup \{v\}$.

\Rule
\label{rule:triangle}
If there is a triangle $v_1,v_2,v_3$ such that
either $v_1$ and $v_2$ have degree~2, or
there is a vertex $v_4$ such that $N(v_4) = \{v_1,v_2\}$,
return $\wvcalg{G-v_i}{U \setminus \{v_i\}}{w} \cup \{v_i\}$,
where $v_i$ is the vertex with minimum weight among $v_1$ and $v_2$.

\Rule
\label{rule:set-state}
If $\state = 0$, then $\state' \gets 1$ if
$|\vcc{1}{G}{U}| \geq \beta \cdot |\vcc{2}{G}{U}|$
and $\state' \gets 2$ otherwise, where $\beta = \cbeta$.
Return $\wvcalgb{G}{U}{w}{\state'}$.

\Rule
\label{rule:vu1u2}
If $\state = 2$ and there is $v \in \vccs{2}{G}{U}$ with $\map{G}{U}(v) > 0$
then let $v'$ be the unique neighbor of $v$ in $G[U]$.
If $|N(v)\setminus \{v'\}| = 1$ denote $N(v) = \{v',x_1\}$, and otherwise denote
$N(v) = \{v',x_1,x_2\}$ where $x_2$ is a good vertex.
Choose a vertex $u_1 \in (N(x_1) \setminus \{v\}) \cap \vcc{2}{G}{U}$,
and let $u'_1$ be the unique neighbor of $u_1$ in $G[U]$.
If $N(v) = \{v',x_1,x_2\}$, $u_1 \notin N(x_2)$, and
$N(x_2) \setminus \{v\} \neq \{u'_1\}$,
choose a vertex $u_2 \in N(x_2) \setminus\{v,u'_1\}$.
Now, apply Rule~(B1) on $u_1$.
If $u_2$ is defined, in each of the two branches obtained by the application
of Rule~(B1), apply Rule~(B1) on $u_2$.
This gives four branches whose graphs are
$G-\{u_1, u_2\}$, $G-(N[u_1] \cup \{u_2\})$, $G-(\{u_1\} \cup N[u_2])$, and
$G-(N[u_1] \cup N[u_2])$.

We now show the correctness of the rule.
We have that $|N(v)\setminus \{v'\}|\in \{1,2\}$
(this follows from the assumption that $G$ has maximum degree 3,
and the assumption that Rule~(\ref{rule:no-U-neighbors}) cannot be applied).
Therefore, either $N(v) = \{v',x_1\}$ or $N(v) = \{v',x_1,x_2\}$.
In the latter case, at least one of the vertices of $N(v)\setminus\{v'\}$ is
good since $\map{G}{U}(v) > 0$.
In both cases, the vertex $x_1$ is not bad (since $\map{G}{U}(v) > 0$).
The vertex $u_1$ exists since the assumption that Rule~(\ref{rule:degree-1})
cannot be applied implies that $N(x_1) \setminus \{v\} \neq \emptyset$,
and at least one of the vertices in $N(x_1) \setminus \{v\}$ is in
$\vcc{2}{G}{U}$ (since $x_1$ is not bad).
Note that $v \in \vccs{2}{G}{U}$ implies that $u_1 \neq v'$.

If $N(v) = \{v',x_1,x_2\}$, $u_1 \notin N(x_2)$, and
$N(x_2) \setminus \{v\} \neq \{u'_1\}$,
then the vertex $u_2$ exists:
If $u'_1 \notin N(x_2)$ then $u_2$ exists due to the assumption that
Rule~(\ref{rule:degree-1}) cannot be applied, and if $u'_1 \in N(x_2)$
then $u_2$ exist due to the assumption that
$N(x_2) \setminus \{v\} \neq \{u'_1\}$.
Since $x_2$ is good, $u_2 \in \vcc{2}{G}{U}$.
By definition, $u_2 \notin \{v,u_1,u'_1\}$.
Additionally, since $v \in \vccs{2}{G}{U}$, $u_2 \neq v'$.

Since Rule~(B1) is correct, and since the vertices $u_1,u'_1,u_2,u'_2$ are
distinct, it follows that Rule~(\ref{rule:vu1u2}) is correct.

\Rule
\label{rule:vu1u2-2}
If $\state = 2$ and there is $v \in \vcc{2}{G}{U} \setminus \vccs{2}{G}{U}$
with $\map{G}{U}(v) > 0$, then let $v'$ be the unique neighbor of $v$ in $G[U]$.
Let $x \in V(G) \setminus U$ be a vertex that is adjacent to both $v$ and $v'$.
Let $A = \{x \in N(\{v,v'\}) \colon N(x) \setminus \{v,v'\} \neq \emptyset\}$.
If $|A| = 2$, choose a non-bad vertex $x_1 \in A$.
Otherwise, choose $x_1,x_2 \in A$ such that $x_1$ is not bad and $x_2$ is good.
Choose $u_1 \in (N(x_1) \setminus \{v,v'\}) \cap \vcc{2}{G}{U}$,
and let $u'_1$ be the unique neighbor of $u_1$ in $G[U]$.
If $|A|=3$, $u_1 \notin N(x_2)$, and $N(x_2) \setminus \{v,v'\} \neq \{u'_1\}$,
choose a vertex $u_2 \in N(x_2) \setminus \{v,v',u'_1\}$
Now, apply Rule~(B1) on $u_1$.
If $u_2$ is defined, in each of the two branches obtained by the application
of Rule~(B1), apply Rule~(B1) on $u_2$.

We now show the correctness of the rule.
The existence of $x$ follows from the fact that $v \in \vccs{2}{G}{U}$.
Due to the assumption that Rule~(\ref{rule:degree-1}) and
Rule~(\ref{rule:triangle}) cannot be applied, $|A| \in \{2,3\}$, where
in the case $|A| = 3$ we have that $x\in A$.
If $|A| = 2$, there is at most one bad vertex in $A$
(since $\map{G}{U}(v) > 0$),
and therefore $A$ contains at least one non-bad vertex.
Otherwise ($|A|=3$), from the assumption that
$v\in \vcc{2}{G}{U} \setminus \vccs{2}{G}{U}$ we have that
there is a vertex $x \in V(G)\setminus U$ that is adjacent to both $v$ and $v'$.
Since $|A|=3$, $x\in A$.
By definition, $x$ is either bad or good, and therefore 
$A$ cannot contain three semi-bad vertices.
It follows that $A$ contains a good vertex, and an additional vertex that is
either good or semi-bad.

The vertex $u_1$ exists since $N(x_1) \setminus \{v,v'\} \neq \emptyset$ by
the definition of $A$,
and at least one of the vertices in $N(x_1) \setminus \{v,v'\}$ is in
$\vcc{2}{G}{U}$ (since $x_1$ is not bad).
If $|A|=3$, $u_1 \notin N(x_2)$, and $N(x_2) \setminus \{v,v'\} \neq \{u'_1\}$,
the vertex $u_2$ exists
(if $u'_1 \notin N(x_2)$ then $u_2$ exists due to the definition of $A$,
and if $u'_1 \in N(x_2)$ then $u_2$ exist due to the assumption that
$N(x_2) \setminus \{v,v'\} \neq \{u'_1\}$).
Since $x_2$ is good, $u_2 \in \vcc{2}{G}{U}$.
By definition, $u_2 \notin \{v,v',u_1,u'_1\}$.

Since Rule~(B1) is correct, and since the vertices $u_1,u'_1,u_2,u'_2$ are
distinct, it follows that Rule~(\ref{rule:vu1u2-2}) is correct.

Note that when the algorithm applies Rule~(\ref{rule:set-state})
in some branch, $U$ is a good vertex cover of $G$.
Therefore, the only rules that are applied afterward in the branch
are Rules (1), (2), (3), (\ref{rule:degree-1}), (\ref{rule:triangle}),
(\ref{rule:vu1u2}), (\ref{rule:vu1u2-2}), and (\ref{rule:final}).

We now analyze the time complexity of the algorithm.
We first analyze rules (1) to (\ref{rule:triangle}).
We use the measure function $m_1(G,U) = |\vcc{\geq 2}{G}{U}|+\alpha |\vcc{1}{G}{U}|$,
where $\alpha = \calpha$.
The branching rule with the largest branching number is
Rule~(\ref{rule:largest}).
This rule generates four branches. In each of the three branches that are
obtained from the branch $G-v$, $3$ vertices of $\vcc{\geq 2}{G}{U}$ are deleted
from the graph,
and $2$ vertices are moved from $\vcc{\geq 2}{G}{U}$ to $\vcc{1}{G}{U}$.
Therefore, the value of $m_1(G,U)$ decreases by $5-2\alpha$ in these branches.
In the branch $G-N[v]$ the algorithm applies Rule~(\ref{rule:no-U-neighbors})
on a vertex of $C$.
Therefore, in this branch one vertex of $\vcc{\geq 2}{G}{U}$ is deleted from the
graph,
one vertex is moved out of $\vcc{\geq 2}{G}{U}$,
and one vertex is moved from $\vcc{\geq 2}{G}{U}$ to $\vcc{1}{G}{U}$.
Therefore, the value of $m_1(G,U)$ decreases by $3-\alpha$.
Thus, the branching vector of Rule~(\ref{rule:largest}) is
$(5-2\alpha,3-\alpha)$, and the branching number is $\ca$.

The analysis of the other branching rules is similar to the analysis of these rules
in~\cite{shachnai2017multivariate}.
One exception is Rule~(\ref{rule:cycle}), for which we need a more careful
analysis.
Due to the previous rules, when this rule is applied, the connected component
of $v$ in $G[U]$ is a cordless cycle with at least 4 vertices, and denote the
vertices of this cycle by $v,v_1,v_2,\ldots,v_{|C|-1}$.
Therefore, after applying Rule~(\ref{rule:cycle}),
Rule~(\ref{rule:path}) can be applied in the branch $G-v$ on $v_2$.
Therefore, there are three branches: $G-\{v,v_2\}$, $G-(\{v\} \cup N[v_2])$,
and $G-N[v]$.
If the size of the cycle is $4$, in the first and third branches, 2 vertices
of the $\vcc{\geq 2}{G}{U}$ are deleted from the graph, and 2 vertices
of $\vcc{\geq 2}{G}{U}$ are moved to $\vcc{1}{G}{U}$.
In the second branch, 3 vertices of $\vcc{\geq 2}{G}{U}$ are deleted from the
graph,
and one vertex of $\vcc{\geq 2}{G}{U}$ is moved to $\vcc{1}{G}{U}$.
It follows that the branching vector is $(4-2\alpha,4-\alpha,4-2\alpha)$.
Similarly, the branching vector is $(3-\alpha,5-2\alpha,3-\alpha)$ if the size
of the cycle is 5.
If the size of the cycle is at least 6, Rule~(\ref{rule:path}) can
be applied in the branch $G-\{v,v_2\}$ on $v_4$, and
in the branch $G-N[v]$ on $v_3$.
Therefore, there are five branches:
$G-\{v,v_2,v_4\}$, $G-(\{v,v_2\} \cup N[v_4])$, $G-(\{v\} \cup N[v_2])$,
$G-(N[v] \cup \{v_3\})$, and $G-(N[v] \cup N[v_3])$.
The branching vector is at least as good as
$(5-2\alpha,6-2\alpha,4-\alpha,5-2\alpha,6-2\alpha)$.
The branching numbers of the branching vectors above are
1.342, 1.389, and 1.395, respectively.

We now consider a recursive call $\wvcalg{G'}{U'}{w'}$ in which
Rule~(\ref{rule:set-state}) is applied and
$|\vcc{1}{G'}{U'}| \geq \beta \cdot |\vcc{2}{G'}{U'}|$.
For the analysis of this branch, we switch from the measure function
$m_1(G,U)$ to a measure function $m_2(G,U) = (1+\alpha\beta)\cdot |\vcc{2}{G}{U}|$.
Note that $m_2(G',U') \leq m_1(G',U')$, so the switch is correct.
After Rule~(\ref{rule:set-state}) is applied, only Rules (1), (2), (3),
(\ref{rule:degree-1}), (\ref{rule:triangle}), and (\ref{rule:final})
are applied, and the only branching rule among these rules is
Rule~(\ref{rule:final}).
When Rule~(\ref{rule:final}) is applied, the value of $m_2(G,U)$
decreases by $2(1+\alpha\beta)$ in each branch (since in each branch, one vertex
in $\vcc{2}{G}{U}$ is deleted from the graph,
and one vertex in $\vcc{2}{G}{U}$ is moved to $\vcc{1}{G}{U}$).
The branching vector is $(2(1+\alpha\beta),2(1+\alpha\beta))$ and the branching
number is $\ca$.

We now consider a recursive call $\wvcalg{G'}{U'}{w'}$ in which
Rule~(\ref{rule:set-state}) is applied and
$|\vcc{1}{G'}{U'}| < \beta \cdot |\vcc{2}{G'}{U'}|$.
In order to show that our algorithm has $O^*(\ca^t)$ running time,
it suffices to show that the number of leaves in the branching tree of
this call is at most $\ca^{m_1(G',U')}$.
To show this we will use the following lemmas.
\begin{lemma}\label{lem:analysis}
For a recursive call $\wvcalg{G}{U}{w}$ in which $q = 2$, the
number of leaves in the branching tree of the call is
at most $(\sqrt{2})^{m(G,U)} \cdot \cb^{M(G,U)}$, where
$m(G,U) = |\vcc{2}{G}{U}|$ and
$M(G,U) = \sum_{v \in \vcc{2}{G}{U}} \map{G}{U}(v)$.
\end{lemma}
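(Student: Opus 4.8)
The plan is to prove Lemma~\ref{lem:analysis} by induction on $m(G,U) = |\vcc{2}{G}{U}|$, tracking the two quantities $m$ and $M$ simultaneously through each branching step. The base case is when no branching rule applies (the graph is bipartite or decomposes into small components via Rules~(1)--(3)), in which case the branching tree is a single leaf and the bound holds trivially. For the inductive step, I would case-split on which rule the algorithm applies when $q=2$: by the observation already recorded in the excerpt, the only applicable rules after Rule~(\ref{rule:set-state}) are the reductions (1), (2), (3), (\ref{rule:degree-1}), (\ref{rule:triangle}), and the branching rules (\ref{rule:vu1u2}), (\ref{rule:vu1u2-2}), and (\ref{rule:final}).

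For each branching rule I would compute, in each branch, how much $m$ decreases (an integer, since it counts deleted-or-relocated vertices of $\vcc{2}{G}{U}$) and how much $M$ decreases, and then verify the recursive inequality
\[
(\sqrt{2})^{m} \cdot \cb^{M} \geq \sum_{\text{branches } i} (\sqrt{2})^{m - \Delta m_i} \cdot \cb^{M - \Delta M_i},
\]
equivalently $1 \geq \sum_i (\sqrt{2})^{-\Delta m_i}\cb^{-\Delta M_i}$, for the branching vector of that rule. The key accounting is the behavior of $M$: deleting a vertex $u_1\in\vcc{2}{G}{U}$ together with $N[u_1]$ removes its contribution $\map{G}{U}(u_1)$ from $M$, but it also removes neighbors of other vertices of $\vcc{2}{G}{U}$, and by the definition of $\map{G}{U}$ each bad neighbor carries weight $1$ (or $\frac12$ in the non-starred case) and each semi-bad neighbor weight $\frac12$ (or $\frac14$). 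Thus eliminating $u_1$ and its incident structure can \emph{increase} the $h$-values of nearby vertices, which would hurt the bound, so the decrease in $M$ must be read off carefully and conservatively. For Rule~(\ref{rule:final}) the analysis is simplest, since it branches on a single $u\in\vcc{2}{G}{U}$ with $|N(u)\cap U|=1$ and each branch deletes or relocates a bounded number of $\vcc{2}{G}{U}$ vertices; for Rules~(\ref{rule:vu1u2}) and~(\ref{rule:vu1u2-2}), the four-way (or two-way, when $u_2$ is undefined) branching on $u_1,u_2$ gives larger $\Delta m$ values, and the correctness arguments already in the excerpt guarantee that $u_1,u'_1,u_2,u'_2$ are distinct vertices of $\vcc{2}{G}{U}$, so the $\Delta m$ values are genuinely as claimed.

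The hard part will be bounding the net change in $M$ in Rules~(\ref{rule:vu1u2}) and~(\ref{rule:vu1u2-2}). Here the whole point of the rule design is that branching on $u_1$ (and $u_2$) is triggered precisely because some vertex $v\in\vcc{2}{G}{U}$ has $\map{G}{U}(v)>0$, and the rule routes the branching through the non-bad/good neighbors $x_1,x_2$ of $v$; one must show that across the branches the combined loss in $M$ from $v$, from $u_1,u_2$, and from the freed-up $h$-contributions is large enough (relative to the loss in $m$) that the inequality $1\ge\sum_i(\sqrt2)^{-\Delta m_i}\cb^{-\Delta M_i}$ holds with the specific constant $\cb$. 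I would handle this by enumerating the local configurations (whether $N(v)$ has two or three elements, whether $u_2$ is defined, and the bad/semi-bad/good status of the relevant neighbors), using the definitions of $\map{G}{U}$ on $\vccs{2}{G}{U}$ and on $\vcc{2}{G}{U}\setminus\vccs{2}{G}{U}$ to lower-bound each $\Delta M_i$, and checking the resulting finite list of branching vectors numerically against $\sqrt2$ and $\cb$. The main risk is that some configuration yields a branching number exceeding $\sqrt2\cdot\cb^{\text{(something)}}$; the rule's insistence that $x_1$ be non-bad and $x_2$ be good is exactly what rules those configurations out, so I would make sure each case invokes that guarantee.
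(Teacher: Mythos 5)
Your plan has the right skeleton---induction plus a per-rule verification of $\sum_i (\sqrt{2})^{-\Delta m_i}\,\cb^{-\Delta M_i}\le 1$, which is indeed how the paper argues---but your accounting of $M$ runs in the wrong direction, and that is exactly where the difficulty of the lemma sits. Since $\cb<1$, the factor $\cb^{-\Delta M_i}$ \emph{grows} with the decrease $\Delta M_i$: a drop in $M$ \emph{hurts} the bound, while an increase of $\map{G}{U}$-values of surviving vertices (e.g., because a deleted vertex of $V(G)\setminus U$ was a bad neighbor of some $u\in\vcc{2}{G}{U}$) only \emph{helps}. You assert the opposite (``can increase the $h$-values of nearby vertices, which would hurt the bound'') and, consistently with that reversal, you plan to \emph{lower}-bound each $\Delta M_i$. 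Lower bounds on $\Delta M_i$ prove nothing here; what the induction step needs is an \emph{upper} bound on how much $M$ can drop in every branch, and obtaining these upper bounds is the delicate part of the paper's proof (for Rule~(\ref{rule:vu1u2}) one shows the decrease of $M$ is at most $8,6,6,8$ in the four branches, using in particular the observation that $\map{G}{U}(u_1)\le\frac12$ whenever deleting $u_1$ turns a semi-bad neighbor into a bad one).

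The second gap is Rule~(\ref{rule:final}), which you dismiss as the simplest case. Counting only $\vcc{2}{G}{U}$-vertices, this rule has branching vector $(2,2)$, i.e., branching number exactly $\sqrt{2}$; hence if $M$ could decrease by any $\Delta>0$ in some branch (and branching on $v$ can in general destroy $\map{G}{U}$-values of other vertices, since $v$'s partner moves into $\vcc{1}{G}{U}$ and creates new bad neighbors), the required check $2(\sqrt{2})^{-2}\cb^{-\Delta}\le 1$ would fail. The case goes through only because Rules~(\ref{rule:vu1u2}) and~(\ref{rule:vu1u2-2}) have priority, so when Rule~(\ref{rule:final}) fires with $q=2$ every $\map{G}{U}$-value is $0$; thus $M(G,U)=0$ and, being nonnegative, $M$ cannot decrease in the children. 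Your proposal never records this fact, and without it the induction step for Rule~(\ref{rule:final}) is false, not merely unproved. (A smaller, repairable flaw: induction on $m$ alone is not well-founded, because reductions such as Rule~(\ref{rule:no-U-neighbors}) or deletions of vertices outside $\vcc{2}{G}{U}$ leave $m$ unchanged; the paper instead inducts on the height of the branching tree.)
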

\begin{lemma}\label{lem:M}
If $U$ is a good vertex cover of $G$,
$M(G,U) \geq |\vcc{2}{G}{U}| - 3|\vcc{1}{G}{U}|$.
\end{lemma}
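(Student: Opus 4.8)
The plan is to rewrite $M(G,U) = |\vcc{2}{G}{U}| - \sum_{v \in \vcc{2}{G}{U}} d(v)$, where $d(v) = 1 - \map{G}{U}(v) \geq 0$ is the \emph{deficit} of $v$, so the claim is equivalent to $\sum_{v} d(v) \leq 3|\vcc{1}{G}{U}|$. Let $B$ and $S$ denote the sets of bad and semi-bad vertices. I would prove the two inequalities $\sum_v d(v) \leq |B|+|S|$ and $|B|+|S| \leq 3|\vcc{1}{G}{U}|$, and then combine them.

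For the second inequality, I would use that every bad or semi-bad vertex has at least one neighbor in $\vcc{1}{G}{U}$ (immediate from their definitions). Since the vertices of $\vcc{1}{G}{U}$ are isolated in $G[U]$, all their neighbors lie in $V(G)\setminus U$, and each such neighbor count is bounded by the maximum degree $3$; hence the number of edges between $B\cup S$ and $\vcc{1}{G}{U}$ is at most $3|\vcc{1}{G}{U}|$, while it is at least $|B|+|S|$.

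For the first inequality, I would bound $d(v)$ by the ``reduction'' term $r$ caused by the bad and semi-bad neighbors in the defining formula for $\map{G}{U}(v)$, using $\max(0,1-r)\geq 1-r$ so that $d(v)\leq r$. Summing over all $v$ and regrouping the contributions by each bad or semi-bad vertex $x$, I would show that $x$ contributes at most $1$ to the total deficit: a bad vertex is adjacent to exactly one $\vcc{2}{G}{U}$-component — either a single $\vccs{2}{G}{U}$-vertex, where it enters with coefficient $1$, or one non-star $\cc{2}{G}{U}$-component, where the shared value $\map{G}{U}(v)=\map{G}{U}(v')$ turns a single incidence into $2\cdot\tfrac12=1$ — while a semi-bad vertex is adjacent to two \emph{distinct} $\vcc{2}{G}{U}$-components, each contributing $\tfrac12$, for a total of $1$.

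The hard part will be the bookkeeping in the first inequality, in particular verifying that the two $\vcc{2}{G}{U}$-neighbors of a semi-bad vertex lie in distinct connected components of $G[U]$, since otherwise a single incidence could be miscounted. This holds because a vertex adjacent to both endpoints of a $\cc{2}{G}{U}$-component and to a vertex of $\vcc{1}{G}{U}$ is bad by definition, so a semi-bad vertex cannot be adjacent to both endpoints of one $2$-component; the definition of $\vccs{2}{G}{U}$ analogously rules out both neighbors lying in a common star component. Once this component-incidence structure is pinned down, I expect the remaining per-vertex coefficient computations (coefficient $1$ for a bad vertex, and $\tfrac12$ per incident component for a semi-bad vertex, in both the star and non-star formulas) to be routine.
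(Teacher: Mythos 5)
Your proposal is correct, but it is organized quite differently from the paper's proof. The paper argues by induction on $|\vcc{1}{G}{U}|$: it picks $v \in \vcc{1}{G}{U}$, applies the induction hypothesis to $G-v$, $U\setminus\{v\}$, and then shows that restoring $v$ decreases $M(G,U)$ by at most $3$ --- namely at most $1$ per neighbor of $v$, which is where the degree bound enters. The per-neighbor bound is established by essentially the same case analysis you carry out: a bad vertex touches a single component of $G[U]$ of size two, charging either $1$ to one $\vccs{2}{G}{U}$-vertex or $\frac{1}{2}+\frac{1}{2}$ to the two vertices of a non-star component, while a semi-bad vertex touches two components at $\frac{1}{2}$ each. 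So the combinatorial core is shared, but your packaging --- the split into $\sum_v d(v) \le |B|+|S|$ (by regrouping deficit contributions per bad/semi-bad vertex) and $|B|+|S| \le 3|\vcc{1}{G}{U}|$ (by counting edges between $B\cup S$ and $\vcc{1}{G}{U}$) --- is a direct double-counting argument with no induction. This is arguably cleaner: you never compare $\map{G}{U}$ with $\map{G-v}{U\setminus\{v\}}$, whereas the paper's inductive step implicitly tracks how vertices change status (bad, semi-bad, good) when $v$ is deleted, a comparison it treats somewhat informally. What the paper's formulation buys in return is reusability: the local statement ``each neighbor of a $\vcc{1}{G}{U}$-vertex decreases $M(G,U)$ by at most $1$'' is exactly what gets invoked again inside the proof of Lemma~\ref{lem:analysis} (for the vertices $u'_1,u'_2$ that are moved into $\vcc{1}{G}{U}$), so the inductive step doubles as a lemma used elsewhere; with your proof that local fact would have to be extracted separately. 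Finally, you correctly identified the one genuine subtlety --- that the two $\vcc{2}{G}{U}$-neighbors of a semi-bad vertex lie in distinct components of $G[U]$, since adjacency to both endpoints of a component either makes the vertex bad or is excluded by the definition of $\ccs{2}{G}{U}$ --- and this is indeed what makes the semi-bad contribution come out to $\frac{1}{2}+\frac{1}{2}=1$ rather than something larger.
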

\begin{proof}
We prove the lemma by induction on $|\vcc{1}{G}{U}|$.
The base of the induction is true since $\map{G}{U}(v) = 1$ for all $v$
if $|\vcc{1}{G}{U}| = 0$, so $M(G,U) = |\vcc{2}{G}{U}|$.
If $|\vcc{1}{G}{U}| > 0$, pick $v \in \vcc{1}{G}{U}$ and
let $G' = G-v$ and $U' = U \setminus \{v\}$.
By the induction hypothesis,
$M(G',U') \geq |\vcc{2}{G'}{U'}| - 3|\vcc{1}{G'}{U'}|
= |\vcc{2}{G}{U}| - 3|\vcc{1}{G}{U}| + 3$.
We will show that $M(G,U) \geq M(G',U') - 3$ which will prove the lemma.
$v$ has at most 3 neighbors, and we will show that each neighbor decreases the
value of $M(G,U)$ by at most 1 compared to $M(G',U')$.

If $x$ is bad than either $x$ has a neighbor $u$ in $\vcc{2}{G}{U}$,
and let $u'$ be the unique neighbor of $u$ in $G[U]$.
If $u \in \vccs{2}{G}{U}$, $u$ is the only neighbor of $x$ in $\vcc{2}{G}{U}$.
Therefore, $x$ decreases the value of $\map{G}{U}(u)$ by at most 1 compared to
$\map{G'}{U'}(u)$ (namely, $\map{G}{U}(u) \geq \map{G'}{U'}(u)-1$), and
does not change the $\map{G}{U}$-values of the other vertices.
If $u \in \vcc{2}{G}{U} \setminus \vccs{2}{G}{U}$,
$x$ can be also adjacent to $u'$, but it does not have neighbors in
$\vcc{2}{G}{U} \setminus \{u,u'\}$.
Therefore, $x$ decreases the values of $\map{G}{U}(u)$ and $\map{G}{U}(u')$
by at most $\frac{1}{2}$,
and does not change the $\map{G}{U}$-values of the other vertices.
Therefore, in this case we also have that 
$x$ decreases the value of $M(G,U)$ by at most 1.

If $x$ is semi-bad then $x$ has two neighbors $u_1,u_2 \in \vcc{2}{G}{U}$.
If $u_1,u_2 \in \vccs{2}{G}{U}$ then $x$ decreases the values of
$\map{G}{U}(u_1)$ and $\map{G}{U}(u_2)$ by at most $\frac{1}{2}$,
and does not change the $\map{G}{U}$-values of the other vertices.
Therefore, $x$ decreases the value of $M(G,U)$ by at most 1.
It is also easy to verify that this is also true when one or two vertices
from $u_1,u_2$ are in $\vcc{2}{G}{U} \setminus \vccs{2}{G}{U}$.
\end{proof}

By Lemma~\ref{lem:analysis}, the number of leaves in the branching
tree of $\wvcalg{G'}{U'}{w'}$ is at most
$(\sqrt{2})^{m(G',U')} \cdot \cb^{M(G',U')}$.
By Lemma~\ref{lem:M} and since
$|\vcc{1}{G'}{U'}| < \beta \cdot |\vcc{2}{G'}{U'}|$, we have that
$M(G',U') \geq |\vcc{2}{G'}{U'}| - 3|\vcc{1}{G'}{U'}|
> (1-3\beta)\cdot m(G',U')$.
Therefore, the number of leaves in the branching tree of the call is
at most $(\sqrt{2}\cdot \cb^{1-3\beta})^{m(G',U')} \leq \ca^{m(G',U')} \leq
\ca^{m_1(G',U')}$.

We now prove Lemma~\ref{lem:analysis}. The proof uses induction
on the height of branching tree of the call.
Consider a call $\wvcalg{G}{U}{w}$.
If Rule~(\ref{rule:final}) is applied in this call then $M(G,U) = 0$.
The application of Rule~(\ref{rule:final}) decreases $m(G,U)$ by~2 in each branch.
Therefore, by the induction hypothesis, the number of leaves in the branching
tree of the call is at most
$2 \cdot (\sqrt{2})^{m(G,U)-2} = (\sqrt{2})^{m(G,U)}$.

Now consider a call in which Rule~(\ref{rule:vu1u2}) is applied.
Suppose that $u_2$ is defined,
and let $u'_2$ be the unique neighbor of $u_2$ in $G[U]$.
Rule~(\ref{rule:vu1u2}) generates four branches.
In the first three branches,
2 vertices from $\{u_1,u'_1,u_2,u'_2\}$ are deleted from the graph,
and the remaining 2 vertices are moved from $\vcc{2}{G}{U}$ to $\vcc{1}{G}{U}$.
Therefore, the value of $m(G,U)$ decrease by 4 in these branches.
Moreover, in the branch $G_2 = G-(N[u_1] \cup N[u_2])$,
$v$ has no neighbors in $V(G_2) \setminus U$, so
Rule~(\ref{rule:no-U-neighbors}) is applied on $v$, and then
Rule~(\ref{rule:degree-1}) is applied on $v$.
Thus, the value of $m(G,U)$ decreases by 6 ($v,u'_1,u'_2$ are deleted
from the graph and $v',u_1,u_2$ are moved from $\vcc{2}{G}{U}$ to
$\vcc{1}{G}{U}$).

We now bound the decrease in $M(G,U)$ in each of the four branches.
Consider the branch $G-\{u_1,u_2\}$.
Since $u'_1$ is moved from $\vcc{2}{G}{U}$ to $\vcc{1}{G}{U}$,
the value $\map{G}{U}(u'_1)$ will no longer be included in $M(G,U)$,
so this causes a decrease of at most 1 in $M(G,U)$.
Additionally, $u'_1$ can have two neighbors in $V(G)\setminus U$.
Since $u'_1$ is moved from $\vcc{2}{G}{U}$ to $\vcc{1}{G}{U}$,
each neighbor of $u'_1$ can cause a decrease of at most 1 in $M(G,U)$
(see the proof of Lemma~\ref{lem:M}).
Therefore, $u'_1$ causes a decrease of at most 3 in $M(G,U)$.
Similarly, $u'_2$ causes a decrease of at most 3 in $M(G,U)$.

The vertex $u_1$ is deleted from the graph, so the value $\map{G}{U}(u_1)$
will no longer be included in $M(G,U)$.
This decreases $M(G,U)$ by at most 1.
Since $u_1$ is deleted from the graph,
one or two neighbors of $u_1$ in $V(G)\setminus U$ can change from semi-bad
to bad vertices.
Suppose that exactly one neighbor $x$ changes from a semi-bad to a bad vertex.
This change can cause a decrease of at most $\frac{1}{2}$ in $M(G,U)$,
either by decreasing the $\map{G}{U}$-value of one neighbor of $x$ by
$\frac{1}{2}$, or by decreasing the $\map{G}{U}$-values of a neighbor $u$ of $x$
and of its unique neighbor $u'$ in $G[U]$ by $\frac{1}{4}$ each.
However, if $x$ change from a semi-bad to a bad vertex, then before the
application of the rule, $\map{G}{U}(u_1) \leq \frac{1}{2}$
(Note that if $u_1 \in \vcc{2}{G}{U}\setminus \vccs{2}{G}{U}$, then we actually
have $\map{G}{U}(u_1) \leq \frac{3}{4}$ and $\map{G}{U}(u'_1) \leq \frac{3}{4}$.
For the sake of the proof, we increase $\map{G}{U}(u'_1)$ by $\frac{1}{4}$
and decrease $\map{G}{U}(u_1)$ by $\frac{1}{4}$.
After this change, $\map{G}{U}(u_1) \leq \frac{1}{2}$).
Therefore, the decrease in $M(G,U)$ due to not including $\map{G}{U}(u_1)$
is at most $\frac{1}{2}$, and the total decrease in $M(G,U)$ due to the deletion
of $u_1$ is at most $\frac{1}{2}+\frac{1}{2} = 1$.
The decrease in $M(G,U)$ is also at most 1 in the case in which
two neighbors of $u_1$ change from semi-bad to bad vertices.
Similarly, $u_2$ causes a decrease of at most 1 in $M(G,U)$.
Therefore, the value of $M(G,U)$ decreases by at most 8 in this branch.

Next consider the branch $G-(N[u_1] \cup \{u_2\})$.
Since now $u'_1$ is deleted from the graph, the decreases in $M(G,U)$ due to
$u_1,u'_1,u_2,u'_2$ are at most $1,1,1,3$, respectively. Thus,
the value of $M(G,U)$ decreases by at most 6 in this branch.
Symmetrically, the value of $M(G,U)$ decreases by at most 6 in the branch
 $G-(\{u_1\} \cup N[u_2])$.
In the branch $G-(N[u_1] \cup N[u_2])$, each vertex from $u_1,u'_1,u_2,u'_2$
decreases the value of $M(G,U)$ by at most 1.
As described above, in this branch $v$ is deleted from the graph, and $v'$
is moved from $\vcc{2}{G}{U}$ to $\vcc{1}{G}{U}$.
These changes decrease $M(G,U)$ by at most 1 and 3, respectively.
The total decrease in $M(G,U)$ in this branch is at most 8.

By the induction hypothesis, the number of leaves in the branching
tree of the call $\wvcalg{G}{U}{w}$ is at most
$(\sqrt{2})^{m(G,U)-4} \cdot \cb^{M(G,U)-8} +
2(\sqrt{2})^{m(G,U)-4} \cdot \cb^{M(G,U)-6} +
 (\sqrt{2})^{m(G,U)-6} \cdot \cb^{M(G,U)-8}
\leq (\sqrt{2})^{m(G,U)} \cdot \cb^{M(G,U)}$.

In the case when $u_2$ is not defined, there are two branches:
$G-u_1$ and $G-N[u_1]$.
The decreases in $m(G,U)$ in these branches are 2 and 4, respectively,
and the decreases in $M(G,U)$ are at most 4 and 6, respectively.
By the induction hypothesis, the number of leaves in the branching
tree of the call $\wvcalg{G}{U}{w}$ is at most
$(\sqrt{2})^{m(G,U)-2} \cdot \cb^{M(G,U)-4} +
 (\sqrt{2})^{m(G,U)-4} \cdot \cb^{M(G,U)-6}
\leq (\sqrt{2})^{m(G,U)} \cdot \cb^{M(G,U)}$.

The analysis of Rule~(\ref{rule:vu1u2-2}) is similar to the analysis
of Rule~(\ref{rule:vu1u2}) (note that after the application of this rule,
Rule~(\ref{rule:triangle}) is applied in the branch $G-(N[u_1] \cup N[u_2])$)
and we omit the details.
We also need to consider Rules (2), (3), (\ref{rule:degree-1}),
and (\ref{rule:triangle}).
Note that we can ignore Rule~(\ref{rule:no-U-neighbors}) since at this stage
of the algorithm, after the application of this rule on a vertex $v$,
Rule~(\ref{rule:degree-1}) is applied on $v$.
The application of Rules (2), (\ref{rule:degree-1}), or (\ref{rule:triangle})
causes deletion of one or more vertices of the graph.
The effect of deleting a single vertex $v$ on the values $m(G,U)$ and $M(G,U)$
is as follows.
If $v \notin \vcc{2}{G}{U}$ then $m(G,U)$ does not change, and $M(G,U)$ does not
change or increases.
If $v \in \vcc{2}{G}{U}$ then $m(G,U)$ decreases by 2, and $M(G,U)$ decreases
by at most 4.
Therefore, if the rule deletes one vertex $v \in \vcc{2}{G}{U}$,
by the induction hypothesis the number of leaves in the branching
tree of the call $\wvcalg{G}{U}{w}$ is at most
$(\sqrt{2})^{m(G,U)-2} \cdot \cb^{M(G,U)-4} \leq
(\sqrt{2})^{m(G,U)} \cdot \cb^{M(G,U)}$.
More generally, if the rule deletes incrementally $l$ vertices from
$\vcc{2}{G}{U}$, we have that the number of leaves is at most
$(\sqrt{2})^{m(G,U)-2l} \cdot \cb^{M(G,U)-4l} \leq
(\sqrt{2})^{m(G,U)} \cdot \cb^{M(G,U)}$.
This completes the proof of Lemma~\ref{lem:analysis}.

\bibliographystyle{abbrv}
\bibliography{wvc}

\end{document}